\DeclareMathOperator{\Tr}{Tr}
\theoremstyle{definition}
\newtheorem{definition}{Definition}
\theoremstyle{plain}
\newtheorem{theorem}{Theorem}
\newtheorem{lemma}{Lemma}
\begin{document}

\preprint{APS/123-QED}

% \title{Contextuality and antidistinguishability are closely related}

% \title{Equivalence between contextuality and weak antidistinguishability}

\title{How contextuality and antidistinguishability are related}
% Specify affiliations first to customise order.

\author{Maiyuren Srikumar}
\email{msri0669@uni.sydney.edu.au}
\affiliation{Centre for Engineered Quantum Systems, School of Physics, University of Sydney, Sydney, NSW 2006, Australia.}
\affiliation{Sydney Quantum Academy, Sydney, NSW, Australia.}

\author{Stephen D. Bartlett}
\affiliation{Centre for Engineered Quantum Systems, School of Physics, University of Sydney, Sydney, NSW 2006, Australia.}

\author{Angela Karanjai}
\affiliation{Centre for Engineered Quantum Systems, School of Physics, University of Sydney, Sydney, NSW 2006, Australia.}

\date{\today}

\begin{abstract}

Contextuality is a key characteristic that separates quantum from classical phenomena and an important tool in understanding the potential advantage of quantum computation. 
However, when assessing the quantum resources available for quantum information processing, there is no formalism to determine whether a set of states can exhibit contextuality and whether such proofs of contextuality indicate anything about the resourcefulness of that set.  
Introducing a well-motivated notion of what it means for a set of states to be contextual, we establish a relationship between contextuality and antidistinguishability of sets of states. We go beyond the traditional notions of contextuality and antidistinguishability and treat both properties as resources, demonstrating that the degree of contextuality within a set of states has a direct connection to its level of antidistinguishability.
If a set of states is contextual, then it must be weakly antidistinguishable and vice-versa. However, critical contextuality emerges as a stronger property than traditional antidistinguishability.

\footnotetext{
$^\dag$Author@Author.unimelb.edu.au\\
$^\ddagger$Author@Author.edu.au\\
$^\S$Author@Author.edu.au}

% \begin{description}
% \item[Usage]
% Secondary publications and information retrieval purposes.
% \item[Structure]
% You may use the \texttt{description} environment to structure your abstract;
% use the optional argument of the \verb+\item+ command to give the category of each item. 
% \end{description}
\end{abstract}

%\keywords{Suggested keywords}%Use showkeys class option if keyword
                              %display desired
\maketitle

%\tableofcontents
% \onecolumngrid

\textit{Introduction.}--
Understanding what separates quantum from classical theory is central for the field of quantum computing. Although there have been many attempts to leverage insights from quantum foundations \cite{AnatoleKenfack_2004, PhysRevA.71.042302, Kleinmann_2011, 10.1088/1367-2630/16/1/013009, 10.1088/1367-2630/14/11/113011,10.1103/physrevlett.115.070501, 10.1103/physrevlett.119.050504, 10.1103/physrevx.8.011015, 10.48550/arxiv.1802.07744}, we are only beginning to understand how distinctly quantum features, such as contextuality, produce advantage.

Contextuality, which has its inception in the works of Bell \cite{RevModPhys.38.447}, Kochen and Specker \cite{10.1512/iumj.1968.17.17004}, captures the inability of explaining quantum mechanics through hidden variable theories that assign values to observables independently of other simultaneously measured properties. As it is an indicator of non-classical behaviour and subsumes well-known phenomena such as Bell non-locality and measurement incompatibility, contextuality has been identified as a resource in quantum information processing \cite{10.1038/nature13460, 10.1103/physrevlett.119.120505, 10.1103/physreva.95.052334,PhysRevA.101.012350, 10.1103/physreva.101.012120}.

Antidistinguishability was first introduced as \textit{Post-Peierls (PP) incompatibility} \cite{10.1103/physreva.66.062111} and plays a key role in the seminal Pusey-Barrett-Rudolph (PBR) Theorem \cite{10.1038/nphys2309}.
Antidistinguishability is an attribute given to a set of quantum states that is equivalent to the viability of \textit{unambiguous state exclusion} \cite{10.1103/physreva.89.022336},  where there exists a single measurement, for which each outcome conclusively rules out one state from the given set. 
This is in contrast to the well-known \textit{distinguishability} criterion that requires the existence of a measurement that determines the state with certainty. Though it is known that only orthogonal sets of states can be distinguishable \cite{Nielsen_Chuang_2010}, non-orthogonal states can be antidistinguishable~\cite{PhysRevResearch.5.023094}.

The concept of antidistinguishable states has been used in quantum communication tasks~\cite{10.1103/physrevlett.115.030504, 10.48550/arxiv.1903.04899, 10.1103/physrevresearch.2.013326, PhysRevA.109.032627} with demonstrations of exponential separation between classical and quantum communication~\cite{10.1103/physrevresearch.2.013326, 10.1103/physrevresearch.2.013326}, in quantum cryptography with the realisation of quantum digital signatures~\cite{10.1103/physrevlett.113.040502}, and more recently in demonstrating polynomial separations in memory requirements in quantum machine learning algorithms \cite{10.48550/arxiv.2402.08606}. In quantum resource theories it has been demonstrated that certain measures of resources can be identified with advantages in state exclusion tasks \cite{10.1103/physrevlett.125.110401, 10.1103/physrevlett.125.110402}. From a foundational perspective, antidistinguishability has played a critical role in the debate between epistemic and ontic interpretations of the quantum state~\cite{10.1038/nphys2309, 10.12743/quanta.v3i1.22, 10.1103/physrevlett.112.250403, 10.48550/arxiv.2401.17980}.
Many of these works show that antidistinguishability has the ability to demonstrate differences between aspects of quantum and classical theory, leading to research into the conditions under which a set of states is antidistinguishable~\cite{10.1103/physreva.66.062111, 10.1103/physreva.89.022336, 10.1103/physrevresearch.2.013326, 10.1103/physreva.107.l030202, 10.48550/arxiv.1804.10457, 10.48550/arxiv.2309.03723, 10.48550/arxiv.2311.17047}.

Both contextuality and antidistinguishability reveal non-classical characteristics of quantum theory, however a direct link has remained elusive.
A possible connection was first suggested in Ref.~\cite{10.48550/arxiv.1802.07744} through the concept of a \textit{partitioning measurement}; such a measurement was guaranteed to exist when contextuality is present. 
More recently, aspects of the relationship have been further developed by establishing a method that employs antidistinguishability to construct noncontextual inequalities~\cite{10.1103/physreva.101.062113}, through finding pairwise \textit{antisets} that require triples of states to be antidistinguishable. 

In this Letter, we show an explicit equivalence between the contextuality of a set of states and weak antidistinguishability. This notion of weak antidistinguishability is related to \textit{weak conclusive 1-state exclusion} which has recently been used to characterise the Choi-rank of quantum channels \cite{stratton2024operationalinterpretationchoirank}. The connection between contextuality and weak antidistinguishability found in this letter therefore offers a possible link between contextuality and channel complexity.
Considering Kochen-Specker (KS) proofs of contextuality in the hypergraph framework of contextuality scenarios \cite{10.1103/physrevlett.112.040401, 10.1007/s00220-014-2260-1}, we show that a set of projectors that generate a KS proof of contextuality must necessarily be weakly antidistinguishable and a set of weakly antidistinguishable states must exhibit contextuality. 
We then examine the extent of contextuality of a set of states by looking at sets where removing just one state causes the remaining set to become non-contextual; we refer to such a set as being \emph{critically contextual}. Critical contextuality emerges as a much stronger criterion than antidistinguishability:  all critically contextual sets are antidistinguishable, however we find examples of antidistinguishable sets that are not critically contextual.  We refer to the stronger notion of antidistinguishability as strong antidistinguishability and conjecture that it implies critical contextuality. 
By establishing an equivalence between two well-studied concepts with a previously unknown relationship, we facilitate the exploration of one concept with the tools employed in the other.

\textit{Results.}--
We start by defining antidistinguishability with projective measurements. A projective measurement is represented by a projection-valued measure (PVM). Operationally, a PVM corresponds to a measurement with each projector corresponding to an outcome of the measurement. (We exclude PVMs that include the zero projector, as, despite being mathematically valid, the zero projector cannot physically be associated with a measurement outcome of any projective measurement.)

\begin{definition}\label{def:antidist}
     A finite set of states $\mathcal{S} = \{\rho_i\}_{i=1}^N$ is,
     \begin{enumerate}[label=(\roman*)]
        \item \textit{Weakly Antidistinguishable} (\textbf{WA}) if there exists a PVM $\{\pi_j\}_{j}$ such that, for each $\pi_j$,  $\exists \rho'\in \mathcal{S}$ such that $\Tr(\pi_j \rho')=0$,
         \item \textit{Antidistinguishable} (\textbf{A}) if there exists a PVM $\{\pi_i\}_{i=1}^N$ such that, $\Tr \left( \pi_i \rho_i \right) = 0$ for all $i=1,...,N$, 
         \item \textit{Strongly Antidistinguishable} (\textbf{SA}) if there exists a PVM $\{\pi_j\}_j$ that (1)~weakly antidistinguishes $\mathcal{S}$ and (2)~for each $\rho_i$ there exists a $\pi_i$ such that $\Tr \left( \pi_i \rho_i \right) = 0$ and $\Tr \left( \pi_i \rho_j \right) \neq 0$ for all $i\neq j$. 
     \end{enumerate}
\end{definition}
When a measurement is performed on an unknown state from a set resulting in a specific measurement outcome, any state(s) within that set that is orthogonal to the projector associated with the outcome can be excluded as the initial state. Thus we say that the particular outcome \emph{excludes} the state(s). 
A set of states is weakly antidistinguishable, if there exists a measurement -- a \emph{weakly antidistinguishing measurement} (\textbf{WA-PVM}) -- where every outcome excludes at least one state from the set.
Antidistinguishability of a set of states is a stronger criteria as it requires that there exist a measurement -- an \emph{antidistinguishing measurement} (\textbf{A-PVM}) -- where the number of outcomes is the same as the number of states in the set, and in addition, each state in the set has at least one outcome that excludes it.
In order for a set of states to be strongly antidistinguishable, there must exist a \emph{strongly antidistinguishing measurement} (\textbf{SA-PVM}) such that, for each state, there exists an outcome that exclusively excludes that state and no other state in the set.

From the definition of \textbf{SA}, one can see that if the PVM $\Pi=\{\pi_j\}_{j=1}^M$ is an \textbf{SA-PVM} for a set $\mathcal{S} = \{\rho_i\}_{i=1}^N$, $M\geq N$. It is always possible to combine the PVM elements in $\Pi$ to produce a new \textbf{A-PVM} that has the same cardinality as the set. This can be done as follows:\\
Let $\gamma_i=\sum_{\pi_j\in \Pi_i}\pi_j$, where $\Pi_{i}\subset \Pi$ such that $\Tr(\pi_j\rho_i)=0$ and $\Tr(\pi_j\rho_k)\neq 0$ for $k<i$ for all $\pi_j\in \Pi_i$. $\Gamma=\{\gamma_i\}_{i=1}^{N}$ is now an \textbf{A-PVM}, which shows that a strongly antidistinguishable set of states is necessarily antidistinguishable.
We therefore have the following chain of implications for a set of states:
\begin{equation*}
    \textbf{SA} \implies \textbf{A}  \implies \textbf{WA}
\end{equation*}
Antidistinguishability and strong antidistinguishability are, in general, sensitive to the addition of a state to the set, while the weaker attribute is unaffected. 

We now define the characterisation of contextuality to a set of states, similar to how antidistinguishability is a property of a set of states.  Recall that a set of observables is noncontextual if there exists a simultaneous assignment of outcomes to the observables such that all the functional relationships between commuting observables are satisfied. The set of observables is contextual if such a value assignment is impossible. We extend this notion to a set of states, as follows. Kochen and Specker's proof~\cite{10.1512/iumj.1968.17.17004} highlights contextuality through logical inconsistencies when assigning truth values to sets of measurement outcomes which are represented as sets of orthogonal projectors.  A pair of mutually orthogonal projectors represents a pair of mutually exclusive measurement outcomes and therefore cannot simultaneously be assigned a truth value of 1. We explore such constraints on truth value assignments of states within the hypergraph framework. Within this framework we start by describing contextuality formally using \textit{contextuality scenarios} and subsequently construct \textit{contextual instances} of scenarios,  that extend contextuality to sets of states.

In a $d$-dimensional Hilbert space $\mathcal{H}$, a set of rank-$1$ projectors $\{\sigma_1 , \sigma_2, ..., \sigma_{d}\}$, are considered a \textit{context} if it satisfies the following two criteria \cite{RevModPhys.94.045007}: (i) $\sigma_i \sigma_j = 0$ for all $i\neq j$ \textit{(orthogonality condition)} and (ii) $\sum_{i=1}^{d} \sigma_i = \mathbb{I}$ \textit{(completeness condition)}.
Physically, a context represents the (mutually exclusive) outcomes of a projective measurement. 
The set of all rank-$1$ projectors for a given Hilbert space is denoted as $\mathbb{X}(\mathcal{H})$ and we denote the set of all contexts formed from a given set of projectors $X\subseteq \mathbb{X}$ as $\mathbb{C}(X)$. Any set of projectors $X\subseteq \mathbb{X}(\mathcal{H})$ and their associated contexts $\mathcal{C} \subseteq \mathbb{C}(X)$ made entirely of projectors from $X$,  can be identified with a hypergraph $\mathcal{G}(X, \mathcal{C})$, with vertices being projectors and hyperedges denoting the contexts \cite{10.1103/physrevlett.112.040401, 10.1007/s00220-014-2260-1}.

\begin{definition}
(Ref.~\cite{ 10.1007/s00220-014-2260-1}.)  A contextuality scenario is a hypergraph $\mathcal{G}(X, \mathcal{C})$ with vertices $X$ corresponding to projectors and a set of hyperedges $\mathcal{C}\subseteq \mathbb{C}(X)$.
\end{definition}
Here we have defined contexts as rank-1 PVMs, which suffices for our results, though contexts can generally include higher-rank PVMs. For a more detailed discussion, refer to Appendix \ref{appen:rank_1}.

The construction of a contextuality scenario -- which we will often simple refer to as scenario -- does not necessarily indicate a proof of contextuality. Instead, a KS proof arises from not being able to assign truth values to the projectors (vertices) of a particular graph, independent of context, such that no mutually orthogonal pair of projectors are both assigned a truth value of 1. We can therefore define a contextuality scenario as being \textit{contextual} in the following way:
\begin{definition}\label{def:contextual}
A contextuality scenario $\mathcal{G}(X, \mathcal{C})$ is \textit{noncontextual} if there exists value assignment $\nu : X \rightarrow \{0,1\}$ such that, %$\nu(P_i).\nu(P_j)=0$, if $Tr[P_iP_j]$ and 
    \begin{align}\label{eq:val_assign_def}
    \sum_{\sigma_i\in C} \nu (\sigma_i) = 1  \end{align}
for all contexts $C\in \mathcal{C}$. Otherwise it is \textit{contextual}.
\end{definition}

Since each context corresponds to the set of distinct outcomes of a measurement, Eq.~\eqref{eq:val_assign_def} is the requirement that any measurement can only have one outcome.
Any value assignment that obeys \eqref{eq:val_assign_def} for all contexts in $\mathcal{C}$ is referred to as a \emph{noncontextual value assignment}. The existence of a noncontextual value assignment for $\mathcal{G}(X, \mathcal{C})$, corresponds to assigning two colours representing truth values to all vertices in a graph, with the constraint that vertices in a common hyperedge must have one, and only one, true assignment (i.e., only one vertex assigned the value of $1$). This satisfiability task is referred to as the \textit{KS colorability problem} \cite{alma991019892499703276}. Hence, in this picture, a contextuality scenario is contextual if it does not admit a KS-colouring of the graph $\mathcal{G}(X, \mathcal{C})$. Referring to a scenario as contextual in this way is a form of \textit{state-independent} contextuality. It is further possible to fix the value of a single projector (state) and ask whether it possible to assign values to the remainder of the graph, giving a \textit{state-dependent} form of contextuality. 

We extend the use of the orthogonality relationships between projectors to explore its implications for sets of states. The possibilitistic implications of the state of a system prior to measurement are captured by its orthogonality relationships with the outcomes of the measurement. For example, the outcome of a measurement on the system corresponding to a projector $\rho_{i}$ implies that the state of the system prior to the measurement could not have been orthogonal to $\rho_{i}$. Thus if one assigns a truth value of 1 to a state, then truth values of all projectors orthogonal to the state must be 0. We define a set of states as noncontextual if after assigning a truth value of 1 to all the states in the set, there exists a simultaneous truth value assignment to all its orthogonal projectors, which respects the orthogonality relationships. If such a value assignment is impossible, then the set of states is contextual.
Thus we are concerned with projectors that have implied truth value assignments as a result of assigning $1$ to a select set of projectors $\mathcal{S}$. 
More specifically, given a set of non-orthogonal projectors $\mathcal{S}=\{\rho_i\}$, we define its \textit{implied projectors} as, 
    \begin{equation}
        \mathcal{I}({\mathcal{S}}) = \{ \sigma\in \mathbb{X}(\mathcal{H}) \vert \exists \rho_i \in \mathcal{S} : \rho_i \sigma=0\} \,.
    \end{equation}
We can now define a contextuality scenario that is constructed through fixed and implied projectors. 
\begin{definition}
    A contextuality scenario $\mathcal{G}(X_\mathcal{S}, \mathcal{C}_\mathcal{S})$ is \textit{generated} (or \textit{implied}) by a set of projectors $\mathcal{S}$ if the following conditions are satisfied: (i) $\mathcal{S} \subset X_{\mathcal{S}}$, (ii) $X_{\mathcal{S}}\backslash \mathcal{S} \subset \mathcal{I}(\mathcal{S})$, and (iii) $\mathcal{C}_{\mathcal{S}}=\mathbb{C}(X_{\mathcal{S}})$. 
\end{definition}
\noindent In other words, $\mathcal{S}$ generates $\mathcal{G}(X_\mathcal{S}, \mathcal{C}_\mathcal{S})$ if all projectors in the hypergraph are either in $\mathcal{S}$ or orthogonal to at least one projector in $\mathcal{S}$ and all possible contexts between projectors in $\mathcal{S}$ is included in the scenario.
For example, the orange projectors in Figure \ref{fig:18_vec_proof} imply all other vertices in the graph and hence generate the 18-vector contextuality scenario presented. 
\begin{definition}\label{def:contextual_scenario}
     A non-orthogonal set of projectors $\mathcal{S}\subset X$ forms a \textit{contextual instance} of the implied scenario $\mathcal{G}(X_\mathcal{S}, \mathcal{C}_\mathcal{S})$ if there exist no noncontextual value assignment $\nu$ such that $\nu(\rho_i)=1$ for all $\rho_i \in \mathcal{S}$.
\end{definition}
\noindent As noted above, the contextual attribute of the scenario is stronger than that of the instance, meaning $\mathcal{G}(X_\mathcal{S}, \mathcal{C}_\mathcal{S})$ may admit a noncontextual value assignment while $\mathcal{S}$ forms a contextual instance. Nevertheless, a contextual instance allows us to formally characterise the contextuality of a given set of states.

\begin{definition}
    A set of pure states $\mathcal{S}=\{\rho_i\}$ is \textit{contextual} if there exists an implied contextuality scenario for which $\mathcal{S}$ is a contextual instance.
\end{definition} 
An example of a contextual set are the states associated with the orange triangles in Figure \ref{fig:18_vec_proof}, generating the 18-vector Kochen-Specker proof of contextuality \cite{Cabello_1996}.

Having defined contextuality and antidistinguishability for a set of states, we can now establish a simple equivalence between the two non-classical properties.

\begin{theorem}\label{theorem:main_proj}
    Let $\mathcal{S}=\{\rho_i\}$ be a set of non-orthogonal pure states. The set $\mathcal{S}$ is  weakly antidistinguishable if and only if $\mathcal{S}$ is contextual.
\end{theorem}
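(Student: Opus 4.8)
The plan is to prove the two implications separately, in each case passing directly between a weakly antidistinguishing measurement and a context of an implied scenario all of whose vertices are forced to the value~$0$.

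For the direction \emph{contextual $\Rightarrow$ weakly antidistinguishable}, I would take an implied scenario $\mathcal{G}(X_\mathcal{S},\mathcal{C}_\mathcal{S})$ witnessing that $\mathcal{S}$ is a contextual instance and examine the candidate value assignment $\nu_0$ defined by $\nu_0(\rho_i)=1$ for $\rho_i\in\mathcal{S}$ and $\nu_0(\sigma)=0$ for $\sigma\in X_\mathcal{S}\setminus\mathcal{S}$. This is a legitimate map $X_\mathcal{S}\to\{0,1\}$ with $\nu_0(\rho_i)=1$ for all $i$, so by Definition~\ref{def:contextual_scenario} it must fail \eqref{eq:val_assign_def} on some context $C\in\mathcal{C}_\mathcal{S}$. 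Since $\mathcal{S}$ is non-orthogonal, no context contains two of its members, hence $\sum_{\sigma\in C}\nu_0(\sigma)=|C\cap\mathcal{S}|\in\{0,1\}$; the only way this differs from $1$ is $C\cap\mathcal{S}=\emptyset$, i.e.\ $C\subseteq X_\mathcal{S}\setminus\mathcal{S}\subseteq\mathcal{I}(\mathcal{S})$. Then $C$, read as a rank-$1$ PVM, has every outcome orthogonal to some state of $\mathcal{S}$, which is exactly condition (i) of Definition~\ref{def:antidist}, so $\mathcal{S}$ is weakly antidistinguishable.

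For the converse, \emph{weakly antidistinguishable $\Rightarrow$ contextual}, I would first refine any WA-PVM to a rank-$1$ one — it stays a WA-PVM because $\sigma\le\pi$ and $\Tr(\pi\rho')=0$ give $\Tr(\sigma\rho')=0$ — so I may assume a context $C=\{\sigma_k\}_{k=1}^d$ with each $\sigma_k$ orthogonal to some $\rho_{i(k)}\in\mathcal{S}$ (using that $\Tr(\sigma_k\rho_{i(k)})=0$ with both operators positive forces $\sigma_k\rho_{i(k)}=0$). Then for each $k$ I would pick an orthonormal basis $B_k$ of $\mathcal{H}$ containing both $\sigma_k$ and $\rho_{i(k)}$, and set $X_\mathcal{S}=\mathcal{S}\cup C\cup\bigcup_k B_k$, $\mathcal{C}_\mathcal{S}=\mathbb{C}(X_\mathcal{S})$. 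Every vertex outside $\mathcal{S}$ is orthogonal to some element of $\mathcal{S}$ (each $\sigma_k$ by construction; each element of $B_k$ other than $\rho_{i(k)}$ because it is orthogonal to $\rho_{i(k)}$), so $\mathcal{G}(X_\mathcal{S},\mathcal{C}_\mathcal{S})$ is generated by $\mathcal{S}$. Were some noncontextual assignment to satisfy $\nu(\rho_i)=1$ for all $i$, then \eqref{eq:val_assign_def} applied to $C$ would single out a unique $k^\star$ with $\nu(\sigma_{k^\star})=1$, whereupon the context $B_{k^\star}$ would contain the two $1$-valued vertices $\sigma_{k^\star}$ and $\rho_{i(k^\star)}$, contradicting \eqref{eq:val_assign_def}; hence $\mathcal{S}$ is a contextual instance and therefore contextual.

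The step I expect to be the main obstacle is this converse construction: the orthogonality relations that make $\mathcal{S}$ weakly antidistinguishable must be \emph{realised as actual hyperedges} of the scenario, which is precisely why one cannot merely take $X_\mathcal{S}=\mathcal{S}\cup C$ but must adjoin the basis completions $B_k$ — all while verifying that the enlarged vertex set still consists only of $\mathcal{S}$ together with genuine implied projectors, so that the scenario is legitimately generated by $\mathcal{S}$. The remaining ingredients are routine: that a rank-$1$ refinement of a WA-PVM is again a WA-PVM, that a rank-$1$ PVM is exactly a context, and that the non-orthogonality hypothesis is used precisely to guarantee $\mathcal{S}\cap\mathcal{I}(\mathcal{S})=\emptyset$ and $|C\cap\mathcal{S}|\le 1$ (consistently, a single pure state is neither weakly antidistinguishable nor contextual, since it cannot be orthogonal to a full PVM).
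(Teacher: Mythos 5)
Your proposal is correct and follows essentially the same route as the paper's proof: the forward direction uses the all-ones-on-$\mathcal{S}$, zero-elsewhere assignment to locate a context disjoint from $\mathcal{S}$ lying inside $\mathcal{I}(\mathcal{S})$, and the converse fine-grains the WA-PVM and adjoins, for each outcome $\sigma_k$, a basis containing both $\sigma_k$ and its orthogonal state (your $B_k$ are the paper's $C_i$), deriving the same contradiction. No gaps; the only differences are cosmetic.
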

\begin{proof} 
See Appendix \ref{app:main_proj}.
\end{proof}

In order to establish the connection between contextuality and the standard notion of antidistinguishability, we need a stricter form of  contextuality for a set of states. 
\begin{definition}
    A set of pure states $\mathcal{S}=\{\rho_i\}$ is \textit{critically contextual} if there does not exist a contextual $\mathcal{S}'\subset \mathcal{S}$.
\end{definition}
We find that for any scenario $\mathcal{G}$ generated by a critically contextual set of states, every context in $\mathcal{G}$ that does not include states in $S$, must strongly antidistinguish $S$. This follows easily from the following Lemma.
\begin{lemma}\label{lemma:maxcontextual_pvm}
    Any PVM $\Pi=\{\pi_i\}$ which weakly antidistinguishes a critically contextual set $\mathcal {S}$ must also strongly antidistinguish it.
\end{lemma}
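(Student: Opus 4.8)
The plan is to prove this by contradiction: if a \textbf{WA-PVM} $\Pi$ for $\mathcal{S}$ fails to be strongly antidistinguishing, I will extract a proper subset of $\mathcal{S}$ that is \emph{still} weakly antidistinguishable, and then invoke Theorem~\ref{theorem:main_proj} to conclude that this subset is contextual — contradicting the maximal contextuality of $\mathcal{S}$.

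First I would fix notation: write $\mathcal{S}=\{\rho_i\}_{i=1}^N$ and, for each outcome $\pi\in\Pi$, let $Z(\pi)=\{i:\Tr(\pi\rho_i)=0\}$ be the set of states that outcome excludes; weak antidistinguishability of $\mathcal{S}$ by $\Pi$ says precisely that each $Z(\pi)$ is nonempty. Now suppose $\Pi$ is \emph{not} an \textbf{SA-PVM} for $\mathcal{S}$. Condition~(1) in the definition of \textbf{SA} is exactly our hypothesis that $\Pi$ weakly antidistinguishes $\mathcal{S}$, so the failure must be of condition~(2): there is a state $\rho_a$ for which no outcome $\pi$ satisfies $\Tr(\pi\rho_a)=0$ together with $\Tr(\pi\rho_j)\neq 0$ for all $j\neq a$. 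Equivalently, $Z(\pi)\neq\{a\}$ for every $\pi\in\Pi$; that is, every outcome that excludes $\rho_a$ also excludes some \emph{other} state of $\mathcal{S}$.

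Next I would check that $\Pi$ weakly antidistinguishes $\mathcal{S}':=\mathcal{S}\setminus\{\rho_a\}$. Take an arbitrary $\pi\in\Pi$. If $a\notin Z(\pi)$, then $Z(\pi)$ is a nonempty set of indices all lying in $\mathcal{S}'$, so $\pi$ excludes a state of $\mathcal{S}'$. If $a\in Z(\pi)$, then by the previous paragraph $Z(\pi)$ also contains some $b\neq a$, so again $\pi$ excludes a state of $\mathcal{S}'$. Hence every element of $\Pi$ excludes at least one state in $\mathcal{S}'$, i.e.\ $\mathcal{S}'$ is weakly antidistinguishable.

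Finally I would close the loop. A one-element set is never weakly antidistinguishable — a PVM summing to $\mathbb{I}$ cannot have every projector orthogonal to a fixed nonzero state — so $\mathcal{S}$ must have at least two elements and $\mathcal{S}'$ is a nonempty proper subset of $\mathcal{S}$. If $|\mathcal{S}'|=1$ this directly contradicts the previous paragraph, and if $|\mathcal{S}'|\ge 2$ then Theorem~\ref{theorem:main_proj} (applicable because any subset of a non-orthogonal set is non-orthogonal) shows $\mathcal{S}'$ is contextual, contradicting the maximal contextuality of $\mathcal{S}$. Either way we reach a contradiction, so $\Pi$ must be an \textbf{SA-PVM} for $\mathcal{S}$. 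The whole argument is combinatorial in the incidence pattern of orthogonalities, so the ranks of the $\pi_i$ and any fine-graining play no role; the only point needing care is the $N=2$ boundary case, which is why I keep the singleton-subset check explicit.
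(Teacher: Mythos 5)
Your proof is correct and follows essentially the same route as the paper's: remove the state that lacks an exclusive excluding outcome, observe that $\Pi$ still weakly antidistinguishes the remaining proper subset, and invoke Theorem~\ref{theorem:main_proj} to contradict maximal contextuality. The only differences are cosmetic — the paper runs the argument in two stages (first showing every state is excluded by some outcome, then handling exclusivity), whereas you collapse both into the single condition $Z(\pi)\neq\{a\}$, and you add an explicit (and worthwhile) check of the singleton boundary case.
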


\begin{proof}
See Appendix \ref{app:lemma1}.
\end{proof}

We can now state the following important implication: 
\begin{theorem}\label{theorem:max_cont_antidist}
    A non-orthogonal set of pure states is critically contextual only if it is strongly antidistinguishable and therefore antidistingiushable.
\end{theorem}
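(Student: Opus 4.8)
The plan is to derive Theorem \ref{theorem:max_cont_antidist} almost immediately from Lemma \ref{lemma:maxcontextual_pvm} together with the equivalence established in Theorem \ref{theorem:main_proj}. First I would observe that if $\mathcal{S}$ is maximally contextual, then in particular $\mathcal{S}$ itself is contextual (taking $\mathcal{S}' = \mathcal{S}$ is not a proper subset, so this does not violate maximality — maximality only forbids \emph{proper} contextual subsets, and the whole set being contextual is exactly what distinguishes a maximally contextual set from a trivially non-contextual one; strictly, I should note that a maximally contextual set is contextual by the convention that the empty or singleton reductions are non-contextual, or simply include this as part of the hypothesis). By Theorem \ref{theorem:main_proj}, contextuality of $\mathcal{S}$ gives us a weakly antidistinguishing PVM $\Pi$ for $\mathcal{S}$.

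Next I would invoke Lemma \ref{lemma:maxcontextual_pvm}: since $\Pi$ weakly antidistinguishes the maximally contextual set $\mathcal{S}$, it must in fact strongly antidistinguish $\mathcal{S}$. Hence $\mathcal{S}$ is strongly antidistinguishable, i.e.\ \textbf{SA} holds. Finally, the chain of implications $\textbf{SA} \implies \textbf{A} \implies \textbf{WA}$ already established in the excerpt (via the explicit coarse-graining construction $\gamma_i = \sum_{\pi_j \in \Pi_i} \pi_j$) immediately yields that $\mathcal{S}$ is antidistinguishable. This completes the argument, and I would also remark that the claimed structural statement — ``every context in any generated scenario $\mathcal{G}$ that avoids $\mathcal{S}$ must strongly antidistinguish $\mathcal{S}$'' — follows in the same breath, because in the proof of Theorem \ref{theorem:main_proj} any such context $\widetilde{C}$ is shown to be a \textbf{WA-PVM} for $\mathcal{S}$, and then Lemma \ref{lemma:maxcontextual_pvm} upgrades it to an \textbf{SA-PVM}.

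The main subtlety — rather than a genuine obstacle — is making precise the sense in which a maximally contextual set is contextual, so that Theorem \ref{theorem:main_proj} can be applied to $\mathcal{S}$ itself; this is a matter of reading the definition correctly (the clause ``there does not exist a contextual $\mathcal{S}' \subset \mathcal{S}$'' should be understood together with the implicit requirement, consistent with the paper's usage, that $\mathcal{S}$ is itself contextual, with proper subsets being the smallest contextual witnesses). Everything else is a direct concatenation of results already proved. I would therefore keep the proof to a few lines: extract a \textbf{WA-PVM} from contextuality via Theorem \ref{theorem:main_proj}, upgrade it to an \textbf{SA-PVM} via Lemma \ref{lemma:maxcontextual_pvm}, and conclude antidistinguishability from the $\textbf{SA} \implies \textbf{A}$ implication.
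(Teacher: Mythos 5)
Your proposal is correct and follows exactly the paper's own argument: maximal contextuality gives contextuality, Theorem \ref{theorem:main_proj} yields a \textbf{WA-PVM}, Lemma \ref{lemma:maxcontextual_pvm} upgrades it to an \textbf{SA-PVM}, and the chain $\textbf{SA}\implies\textbf{A}$ finishes the proof. Your side remark about reading the definition of maximal contextuality so that $\mathcal{S}$ itself is contextual is a fair point of precision that the paper also glosses over, but it does not change the argument.
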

\begin{proof}
    We know from Theorem \ref{theorem:main_proj} that a contextual (and thereby critically contextual) set of states must have a \textbf{WA-PVM} $\Pi$. By Lemma \ref{lemma:maxcontextual_pvm} we know that $\Pi$ must be strongly antidistinguishing. Since strongly antidistinguishable states are also antidistinguishable, we have proven the result. 
\end{proof}

\begin{figure}
    \centering    \includegraphics[width=220pt]{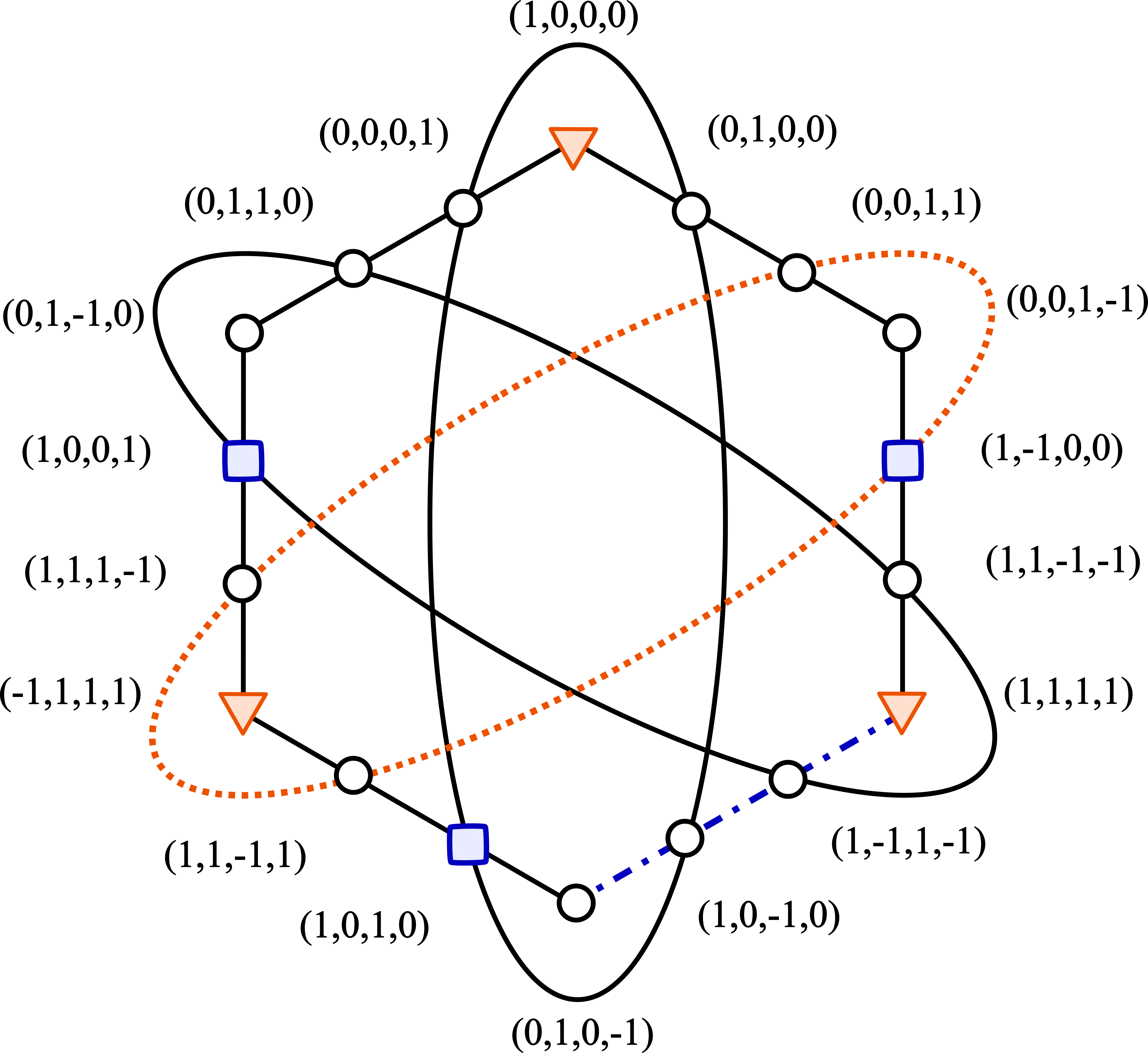}
    \caption{A contextual scenario with 18 vectors and 9 contexts in dimension $d=4$ \cite{Cabello_1996}. The vertices are projectors (pure states) where $(a,b,c,d)$ is identified with the projector $\ket{a,b,c,d}\bra{a,b,c,d}$ where $\ket{a,b,c,d}=a\ket{00}+b\ket{01}+c\ket{10}+d\ket{11}$. The hyperedges form contexts of mutually orthogonal sets, where for simplicity the hyperedges are illustrated as straight lines and ovals. The scenario is generated by projectors indicated by orange triangles, along with its \textbf{SA} and \textbf{A} context given by the orange, dotted line. The projectors indicated by blue squares do not generate the entire graph -- all but the $(1,0,0,0)$ state -- and therefore demonstrate only a contextual instance with the associated (strongly) antidistinguishing context given by the blue, dash-dotted line.}
    \label{fig:18_vec_proof}
\end{figure}

The relation between contextuality and antidistinguishability is demonstrated in Figure \ref{fig:18_vec_proof}, where we see that any set of projectors that generate a contextual scenario (for example the projectors indicated by orange triangles) must necessarily be at least \textbf{WA} as claimed in Theorem \ref{theorem:main_proj}. 
Hence, choosing a generating set of projectors from a known contextual proof guarantees the existence of a \textbf{WA} measurement. Furthermore, the removal of any one projector allows for a noncontextual scenario as two non-orthogonal states cannot be \textbf{WA} (otherwise they would be distinguishable) and thereby contextual from Theorem \ref{theorem:main_proj}. Hence we have a critically contextual set of states which, by Theorem \ref{theorem:max_cont_antidist}, are both \textbf{A} and \textbf{SA}.

In the example above, we have used a known contextual scenario to obtain an antidistinguishable set of states. We also provide an example using the \textit{PBR states} \cite{10.1038/nphys2309} where antidistinguishability can be used to establish the contextuality of a set of states, in Appendix \ref{app:pbr}.

The statement of Lemma \ref{lemma:maxcontextual_pvm} illustrates the strength of critical contextuality as a condition. For a set of states, if one finds a \textbf{WA-PVM} that is not also an \textbf{SA-PVM}, then one can conclude that the set of states is not critically contextual. This method of ruling out a set of states as being critically contextual is potentially much more efficient than considering all possible contextuality instances for a set of states and checking if they remain contextual after the removal of a state from the set. However, due to the one way implication of critical contextuality implying strong antidistinguishability, the existence of an \textbf{SA-PVM} cannot be used to confirm that a set of states is critically contextual. 
Critical contextuality emerges as a stronger property than antidistinguishability.
This can be seen with the example given by a set of states associated with the projectors labelled by blue squares in Figure \ref{fig:18_vec_proof} with the addition of the state associated with $(1,0,0,0)$. The blue dash-dotted hyperedge is therefore an antidistinguishing measurement for this larger set (note the orthogonality between $(1,0,0,0)$ and $(0,1,0,-1)$ projectors), while also being an antidistinguishing measurement for the subset of only those states labelled by blue squares. Theorem \ref{theorem:main_proj} states that the states labelled by blue squares will therefore be contextual, implying that the initial larger antidistinguishable set is not critically contextual. This example shows that antidistinguishability does not imply critical contextuality, but does not rule out strong antidistinguishability implying critical contextuality. We have not found a counterexample to strong antidistinguishability implying critical contextuality. For example, the PBR measurement is an \textbf{SA-PVM} for the PBR states, and, as we have shown, the PBR states also turn out to be critically contextual. Similarly, the Lison\v{e}k set \cite{PhysRevA.89.042101} has a \textbf{SA} set that is also critically contextual~\footnote{See Supplemental Material at [] for additional examples including the Lison\v{e}k set~\cite{PhysRevA.89.042101} and the Yu-Oh set~\cite{10.1103/physrevlett.108.030402}, and which includes Refs.~\cite{10.48550/arxiv.2408.16764}.}. Hence we conjecture, an equivalence between a set being \textbf{SA} and critically contextual.

\textit{Discussion.}--
We have defined a natural characterisation of contextuality for a set of pure states $\mathcal{S}$, as the inability to construct a noncontextual value assignment for a contextuality scenario generated by $\mathcal{S}$. It is worth noting that contextual sets of states need not be KS sets themselves; instead their significance lies in their capacity to generate KS sets.
The understanding of contextuality as a property of a set of states has implications for understanding contextuality beyond sequential measurements and as a resource that provides exclusionary measurements.
Defining contextuality as a resource for a set requires the natural supposition that contextuality should not disappear with the addition of another state. This property is satisfied by Def. \ref{def:contextual} and it is therefore consistent that Theorem \ref{theorem:main_proj} demonstrates a simple equivalence between the attributes of contextuality and weak antidistinguishability. Conversely, both \textbf{A} and \textbf{SA} properties of a set are sensitive to the addition of states. Hence we are able show that \textbf{A} and \textbf{SA} are related to the similarly sensitive critically contextual attribute in Theorem \ref{theorem:max_cont_antidist}. 

Here we have specifically focused on KS contextuality but in fact there exists more general notions of contextuality that do not present themselves as the violation of Eq.~\eqref{eq:val_assign_def}. For example, the Yu-Oh set \cite{10.1103/physrevlett.108.030402} is contextual with regards to a more general form of contextuality~\footnotemark[\value{footnote}]. 
Nevertheless, explicitly demonstrating KS sets has become increasingly important in recent literature. In Ref.~\cite{10.1103/physrevlett.134.010201}, it was shown that every bipartite perfect quantum strategy defines a KS set. KS sets also underpin many interesting bipartite quantum correlations, as demonstrated in Ref.~\cite{PhysRevResearch.6.L042035}. Furthermore, exploring contextuality as a resource for quantum computation \cite{PhysRevLett.102.050502, PhysRevA.88.022322, 10.1103/physrevlett.119.120505}, connecting KS sets to channel capacity \cite{PhysRevLett.104.230503}, all can be seen through KS contextuality. Our work supports these directions by showing that antidistinguishability can be used to identify generators of KS sets.

Although it is more natural to look at the way contextuality implies the existence of antidistinguishing measurements, the reverse is also useful. It has been shown that antidistinguishability characteristics can yield tighter noncontexuality inequalities~\cite{10.1103/physreva.101.062113}. We have shown that it is not possible to have a contextuality proof where the generating set of projectors are not at least \textbf{WA}.
Therefore the characteristics that determine a set of states to not be \textbf{A} can be used in understanding when contextual scenarios are possible. The simplest to consider is one of the first conditions for antidistinguishability for the specific case of three states \cite{10.1103/physreva.66.062111} (see Lemma \ref{lemma:cond_three_states_anti}). One consequence of such constraints is that a set of three projectors that have large overlaps (where the sum of pairwise overlaps is greater than one) cannot generate a scenario that demonstrates a proof of contextuality.  This result proves that the PBR set of states \cite{10.1038/nphys2309} is critically contextual.

Additional conditions have been established for states that are antidistinguishable, including more general bounds on the sum of overlaps \cite{10.1103/physreva.89.022336}, conditions on the spectral decomposition of the antidistinguishing measurement and states \cite{10.48550/arxiv.1804.10457}, and recently the incoherence of the Gram matrix associated with the set of states \cite{10.48550/arxiv.2311.17047}. More generally, a semi-definite program can yield the antidistinguishing measurement~\cite{10.1103/physreva.89.022336}, when possible. 
Therefore the inability to obtain an antidistinguishing solution can also be numerically determined, though not necessarily efficiently. These tools from antidistinguishability literature offer a clear method for identifying when projectors can generate contextuality proofs.

The concept of weak antidistinguishability is closely related to weak state exclusion, which has gained significance due to its recently established operational connection to the Choi-rank of a quantum channel \cite{stratton2024operationalinterpretationchoirank}. In particular, it has been shown that the Choi-rank bounds the success of an entanglement-assisted state exclusion task. This naturally prompts consideration of how our results might inform or extend this finding. Focusing on pure states, we identify a potential relationship between the Choi-rank and the contextuality exhibited by the set of states that have undergone evolution through the channel. A detailed exploration of this connection is left for future work.

\textit{Acknowledgements -- }This work is supported by the Australian Research Council via Discovery Project number DP220101771 and the Centre of Excellence in Engineered Quantum Systems (EQUS) project number CE170100009.

\appendix
% \newpage
\setcounter{secnumdepth}{5}

\section{Comment on Rank-1 Projectors}\label{appen:rank_1}

We have chosen to define contexts in the hypergraphs as rank-1 PVMs, however contexts can in general be composed of higher rank PVMs. For the results in this paper, we do not lose generality by making this choice.
This is because fine-graining of a PVM to a rank-1 PVM cannot remove contextuality, that is, if a higher rank PVM forms a contextual instance for a set of states, then any fine-graining of that PVM will also form a contextual instance for the same set. Thus when looking for the existence of a contextual instance for a set of states, it is sufficient to restrict to considering only rank-1 projectors.

It is, important to note that while fine-graining of a PVM  cannot remove contextuality, it may add contextuality. That is, given that a higher rank PVM does not form a contextual instance for a set of states, it is possible that, for some choice of fine-graining, the rank-1 PVM that it is decomposed into, may form a contextual instance for the same state of states. Thus in instances, where one wants to investigate whether a set of states is contextual or not with respect to a restricted set of measurements, consisting of higher rank PVMs, one can relax the condition of only rank-1 projectors forming a context in the hypergraph. One can then represent higher rank projectors as vertices of the hypergraph and define a context as a set of vertices that satisfies the orthogonality and completeness conditions.

\section{Proof of Theorem \ref{theorem:main_proj}}\label{app:main_proj}

\noindent
\textbf{Theorem 1.} \textit{Let $\mathcal{S}=\{\rho_i\}$ be a set of non-orthogonal pure states. The set $\mathcal{S}$ is  weakly antidistinguishable if and only if $\mathcal{S}$ is contextual.}

\begin{proof} 
We start with the \textit{if} direction by assuming we have a contextual set $\mathcal{S}$, i.e. we assume the existence of an implied scenario $\mathcal{G}(X_{\mathcal{S}}, \mathcal{C}_{\mathcal{S}})$ for which $\mathcal{S}$ is a contextual instance. Let $\nu:X_{\mathcal{S}} \rightarrow \{0,1\}$ be a value assignment such that, $\nu(\sigma)=1$ for all $\sigma \in \mathcal{S}$ and $\nu(\sigma_{i})\cdot\nu(\sigma_{j})=0$ if $\Tr[\sigma_{i}\sigma_{j}]=0$ for $\sigma_{i},\sigma_{j} \in \mathcal{S}$.
Since  $\mathcal{S}$ is a contextual instance for  $\mathcal{G}(X_{\mathcal{S}},\mathcal{C}_{\mathcal{S}})$, we know that $\nu$ cannot be noncontextual. 
Therefore, because each context can contain at most one projector in $\mathcal{S}$ -- otherwise states in $\mathcal{S}$ would be orthogonal -- there exists a context $\widetilde{C} \in \mathcal{C}_{\mathcal{S}}$ such that,
 \begin{equation}
 \sum_{\sigma\in \widetilde{C}} \nu(\sigma) = 0    
 \end{equation}
where we note that from the definition of $\nu$ we subsequently have that $\mathcal{S}\cap\widetilde{C}=\emptyset$.
Since $\mathcal{G}(X_{\mathcal{S}},\mathcal{C}_{\mathcal{S}}) $ is generated by $\mathcal{S}$, we have that every projector in $\widetilde{C}$ is orthogonal to at least one element in $\mathcal{S}$, that is, $\widetilde{C}\subset \mathcal{I}(\mathcal{S})$. We also have that because $\widetilde{C}$ is a context, it is a PVM and therefore the projectors in $\widetilde{C}$ satisfy the conditions for a \textbf{WA-PVM} for $\mathcal{S}$.

To show the reverse implication. we demonstrate that given a weakly antidistinguishing PVM $E=\{e_i\}$ for $\mathcal{S}$, we can construct a weakly antidistinguishing rank-1 PVM which in turn allows us to construct a contextuality scenario for which $\mathcal{S}$ is a contextual instance. Let $\Pi=\{\pi_i\}_i$ be a rank-1 PVM associated with fine graining the elements of $E$ such that every non-rank-1 projector $e_k\in E$ is fine-grained into separate rank-1 projectors in $\Pi$, i.e. $e_k = \sum_{\Pi_{e_k}}\pi_l$ where $\Pi_{e_k} \subset \Pi$. Thus we have that all the projectors in $\Pi_{e_{k}}$ are also orthogonal to the same state(s) as $e_k$. Hence $\Pi$ is also a \textbf{WA-PVM} for $\mathcal{S}$. 

Since all projectors in $\Pi$ are orthogonal to at least one state in $\mathcal{S}$, by definition $\Pi \subset \mathcal{I}(\mathcal{S})$, we can construct a context $C_i$ such that $\pi_i, \rho_{\pi_i} \in C_i$ where $\rho_{\pi_i}\in \mathcal{S}$ is an associated orthogonal state to $\pi_i$. 
Therefore we can construct an implied contextuality scenario $\mathcal{G}(X_{\mathcal{S}}, \mathcal{C}_{\mathcal{S}})$ where $X_{\mathcal{S}} =  \left(\cup_i C_i\right) \cup \mathcal{S}$ and $\mathcal{C}_{\mathcal{S}}=\mathbb{C}(X_{\mathcal{S}})$. Now we attempt to find a noncontextual value assignment $\nu$ for $\mathcal{G}(X_{\mathcal{S}}, \mathcal{C}_{\mathcal{S}})$. However if $\nu$ satisfies the condition that $\nu(\sigma)=1$ for all $\sigma \in \mathcal{S}$, then it cannot simultaneously satisfy  $\sum_{\sigma\in C_{i}}\nu(\sigma)=1$ and $\sum_{\pi_{i}\in \Pi}\nu(\pi_{i})=1$. Since $C_i, \Pi \in \mathcal{C}_{\mathcal{S}}$, we have that $\mathcal{S}$ is a contextual instance for $\mathcal{G}(X_{\mathcal{S}}, \mathcal{C}_{\mathcal{S}})$.
\end{proof}

\section{Proof of Lemma \ref{lemma:maxcontextual_pvm}}\label{app:lemma1}

\noindent
\textbf{Lemma 1.} \textit{Any PVM $\Pi=\{\pi_i\}$ which weakly antidistinguishes a critically contextual set $\mathcal {S}$ must also strongly antidistinguish it.}

\begin{proof}
Let $S_{i}\subset \mathcal{S}$ such that $\Tr\left(\pi_i\rho_k\right)=0$, $\forall \rho_k\in S_i$. We will first prove by contradiction that $\cup_i S_i=\mathcal{S}$, which will show that all states are excluded by some outcome. For this we assume that there exists $\rho\in \mathcal{S}\backslash \{\cup_i S_i\}$. One can remove $\rho$ from $\mathcal{S}$ and the remaining set, that is $\mathcal{S}\backslash\{\rho\}$, is still weakly antidistinguishable by the $\Pi$. This implies from Theorem \ref{theorem:main_proj} that  $\mathcal{S}\backslash\{\rho\}$ is contextual, which in turn leads to a contradiction to $\mathcal{S}$ being critically contextual.
Thus we have that $\cup_i S_i=\mathcal{S}$, which implies that for every $\rho_k\in \mathcal{S}$, there exists $\pi_k \in \Pi$, such that $\Tr\left(\pi_k\rho_k\right)=0$. Now we assume that $\Pi$ is NOT a strongly antidistinguishing PVM for $\mathcal{S}$ which implies there exists $\rho\in \mathcal{S}$ such that for every $\pi_k \in \Pi$ that is orthogonal to $\rho$, there must also exist another state $\rho_k$ such that $\Tr\left(\pi_k\rho_k\right)=0$. Thus $\mathcal{S}/\{\rho\}$ remains weakly antidistinguishable by $\Pi$, which  again contradicts $\mathcal{S}$ being critically contextual. 
\end{proof}

\section{Critical contextuality of PBR states}\label{app:pbr}

Here we show that the \textit{PBR states} \cite{10.1038/nphys2309}, $\mathcal{S}_{\mathrm{PBR}}= \{\ket{0}\otimes\ket{0}, \ket{0}\otimes\ket{+}, \ket{+}\otimes\ket{0}, \ket{+}\otimes\ket{+}\}$, are maximally contextual. It was shown in \cite{10.1038/nphys2309} that $\mathcal{S}_{\mathrm{PBR}}$ has an \textbf{A-PVM} which projects onto the following states, 
\begin{align*}
    \ket{\xi_1} &= (\ket{0}\otimes\ket{1}+ \ket{1}\otimes\ket{0})/\sqrt{2}  \\
    \ket{\xi_2} &= (\ket{0}\otimes\ket{-} + \ket{1}\otimes\ket{+})/\sqrt{2}  \\
    \ket{\xi_3} &= (\ket{+}\otimes\ket{1} + \ket{-}\otimes\ket{0})/\sqrt{2}  \\
    \ket{\xi_4} &= (\ket{+}\otimes\ket{-} + \ket{-}\otimes\ket{+})/\sqrt{2}  
\end{align*}
In fact, a closer look reveals that this measurement is also an \textbf{SA-PVM}, as every outcome exclusively excludes a state and no other state in the set. From Theorem \ref{theorem:main_proj} we know that $\mathcal{S}_\mathrm{PBR}$ is contextual and we now show that it is also maximally contextual by utilising the following two Lemmas that are specific for a set of three states: 
\begin{lemma}\label{lemma:three_states}
    All non-orthogonal sets of three states that are \textbf{WA}, are necessarily \textbf{A} and \textbf{SA}.
\end{lemma}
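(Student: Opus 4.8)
The plan is to derive the lemma from the resource hierarchy already established rather than manipulate measurements by hand. By Theorem~\ref{theorem:main_proj}, a non-orthogonal triple $\mathcal{S}=\{\rho_1,\rho_2,\rho_3\}$ that is \textbf{WA} is contextual, so by Theorem~\ref{theorem:max_cont_antidist} it suffices to show that $\mathcal{S}$ is in fact \emph{maximally} contextual, i.e.\ that no proper subset of $\mathcal{S}$ is contextual. Invoking Theorem~\ref{theorem:main_proj} contrapositively, this reduces to the purely operational statement that no subset of $\mathcal{S}$ of size at most two is \textbf{WA}.

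Sets of size at most one are trivially not \textbf{WA}: a \textbf{WA-PVM} $\{\pi_j\}$ for $\{\rho\}$ would need $\Tr(\pi_j\rho)=0$ for every $j$, contradicting $\sum_j\Tr(\pi_j\rho)=\Tr\rho=1$. The substantive case is a pair $\{\rho_a,\rho_b\}$ with $\langle\psi_a|\psi_b\rangle\neq 0$. Given a hypothetical \textbf{WA-PVM} $\{\pi_j\}$ for the pair, I would coarse-grain it into two outcomes, $\gamma_a=\sum_{j:\Tr(\pi_j\rho_a)=0}\pi_j$ and $\gamma_b=\mathbb{I}-\gamma_a=\sum_{j:\Tr(\pi_j\rho_a)\neq 0}\pi_j$. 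The \textbf{WA} condition forces $\Tr(\pi_j\rho_b)=0$ whenever $\Tr(\pi_j\rho_a)\neq 0$, so $\Tr(\gamma_a\rho_a)=\Tr(\gamma_b\rho_b)=0$; hence $|\psi_a\rangle\in\operatorname{range}(\gamma_b)$ and $|\psi_b\rangle\in\operatorname{range}(\gamma_a)$ live in orthogonal subspaces, giving $\langle\psi_a|\psi_b\rangle=0$, a contradiction. (Equivalently, $\{\gamma_a,\gamma_b\}$ would perfectly distinguish $\rho_a$ and $\rho_b$, which non-orthogonal pure states cannot be.) Therefore no proper subset of $\mathcal{S}$ is \textbf{WA}, so $\mathcal{S}$ is maximally contextual, and Theorem~\ref{theorem:max_cont_antidist} yields that $\mathcal{S}$ is \textbf{SA}, and therefore \textbf{A}.

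The only real obstacle is the two-state claim, and it also pinpoints why ``three'' is essential: deleting a state from a non-orthogonal triple always leaves a non-orthogonal pair, and such a pair is never \textbf{WA}; for four or more states the reduced set can remain \textbf{WA} (even contextual), consistent with the later examples of \textbf{A} sets that are not maximally contextual. As an alternative that bypasses Theorem~\ref{theorem:max_cont_antidist} and constructs the \textbf{SA-PVM} explicitly, one can instead partition the outcomes of a \textbf{WA-PVM} for $\mathcal{S}$ by the set of states each outcome excludes, coarse-grain within each class, and use non-orthogonality of $\mathcal{S}$ to show that for each of the three states the class of outcomes excluding \emph{only} that state is nonempty; those three coarse-grained outcomes are then the exclusive outcomes required for strong antidistinguishability.
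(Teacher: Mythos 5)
Your proposal is correct and follows essentially the same route as the paper's own proof: both reduce the claim, via Theorem~\ref{theorem:main_proj} and Theorem~\ref{theorem:max_cont_antidist}, to the impossibility of (weakly) antidistinguishing a non-orthogonal pair, which would amount to perfectly distinguishing it. Your forward version (showing maximal contextuality directly) is logically the contrapositive of the paper's argument by contradiction, and your explicit coarse-graining of the \textbf{WA-PVM} for the two-state case is a useful detail the paper leaves implicit.
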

\begin{proof}
    This can be shown by contradiction. If we assume a set of three states $\mathcal{S}$ is \textbf{WA} and not \textbf{SA}, then by Theorem \ref{theorem:max_cont_antidist}, $\mathcal{S}$ cannot be maximally contextual. Hence there must exist a maximally contextual subset $\mathcal{S}'\subset\mathcal{S}$ over two states that is \textbf{SA} by Theorem \ref{theorem:max_cont_antidist}. For a set of two states, an \textbf{SA-PVM} is able to distinguish the states in $\mathcal{S}'$. Since non-orthogonal states cannot be distinguished, this is a contradiction. Thus the set must be \textbf{SA} and therefore also \textbf{A}.
\end{proof}
\begin{lemma}\label{lemma:cond_three_states_anti} (Ref. \cite{10.1103/physreva.66.062111})
    Given a set of three states $\mathcal{S}=\{\psi_1, \psi_2, \psi_3\}$, let their pairwise overlaps be $\delta_1 = |\braket{\psi_1|\psi_2}|^2$, $\delta_2 = |\braket{\psi_1|\psi_3}|^2$, $\delta_3 = |\braket{\psi_2|\psi_3}|^2$. The set $\mathcal{S}$ is antidistinguishable if and only if both the following conditions hold,
    \begin{align}
        \delta_1 + \delta_2 + \delta_3 &< 1 \label{eq:condition_pbr}\\
        (\delta_1 + \delta_2 + \delta_3 - 1)^2 &\geq 4\delta_1 \delta_2 \delta_3.
    \end{align}
\end{lemma}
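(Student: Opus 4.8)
The plan is to turn antidistinguishability of a triple into a concrete feasibility question about three real parameters and then to analyse exactly when it is feasible. Since we may restrict to rank-$1$ PVMs, an antidistinguishing measurement for $\mathcal{S}=\{\psi_1,\psi_2,\psi_3\}$ can be taken to be an orthonormal basis $\{e_1,e_2,e_3\}$ of the (at most $3$-dimensional) span with $\langle e_i|\psi_i\rangle=0$ for each $i$. Writing each $\psi_i$ in this basis, the vanishing-diagonal condition forces $\psi_i$ to be a superposition of the two vectors $e_j$ with $j\neq i$, so I would set $a=|\langle e_2|\psi_1\rangle|^2$, $b=|\langle e_1|\psi_2\rangle|^2$, $c=|\langle e_1|\psi_3\rangle|^2\in[0,1]$; normalisation fixes the remaining moduli, and computing the three pairwise overlaps yields exactly $\delta_1=(1-a)(1-b)$, $\delta_2=a(1-c)$, $\delta_3=bc$.

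Next I would argue this correspondence is tight. One direction is immediate from the previous paragraph. For the converse, given $a,b,c\in[0,1]$ solving those three relations, I would construct a $3\times3$ matrix $M$ with zero diagonal, whose off-diagonal entries have the moduli just described and whose phases are chosen so that $M^\dagger M$ equals the Gram matrix $G$ of $\mathcal{S}$ — only three scalar phase conditions are required and there is plenty of freedom — and then write $M=W\Psi$ for a unitary $W$, with $\Psi$ the matrix whose columns are the $\psi_i$; the basis $e_i=W^\dagger\epsilon_i$ then satisfies $\langle e_i|\psi_i\rangle=M_{ii}=0$, giving the antidistinguishing PVM. A by-product is that feasibility depends only on $(\delta_1,\delta_2,\delta_3)$ and not on the Bargmann phase of the triple, which is what licenses a criterion phrased purely in the $\delta_i$. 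Here I would invoke the standing assumption that the states are pairwise non-orthogonal and distinct, i.e. $\delta_i\in(0,1)$; this removes degenerate cases and is exactly what makes the strict inequality in the first stated condition the right one.

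It remains to decide when the system $\delta_1=(1-a)(1-b)$, $\delta_2=a(1-c)$, $\delta_3=bc$ has a solution in the unit cube. Because $\delta_2,\delta_3>0$ force $c\in(0,1)$, I would eliminate $a=\delta_2/(1-c)$ and $b=\delta_3/c$ (legitimate precisely when $\delta_3\le c\le 1-\delta_2$), substitute into the first relation, and obtain the quadratic $(1-\delta_1)c^2+(\delta_1+\delta_2-\delta_3-1)c+\delta_3(1-\delta_2)=0$. A short computation shows its discriminant is $(\delta_1+\delta_2+\delta_3-1)^2-4\delta_1\delta_2\delta_3$, so the second stated condition is exactly the existence of real roots; evaluating the (upward-opening) quadratic at the endpoints gives the nonnegative values $\delta_1\delta_3(1-\delta_3)$ at $c=\delta_3$ and $\delta_1\delta_2(1-\delta_2)$ at $c=1-\delta_2$, so a root lies in $[\delta_3,1-\delta_2]$ iff in addition the vertex does, which reduces to $\delta_1+\delta_2+\delta_3-1\le 2\delta_1\min(\delta_2,\delta_3)$. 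I would finish by checking that, under $\delta_i\in(0,1)$, this last inequality together with the discriminant condition is equivalent to $\delta_1+\delta_2+\delta_3<1$ together with the discriminant condition — the key point being that $\delta_1+\delta_2+\delta_3\ge 1$ together with both inequalities would force $\delta_1\ge1$, which is excluded. The main obstacle is precisely this final bookkeeping: the quadratic algebra itself is routine, but one must verify that the interval constraints $\delta_3\le c\le 1-\delta_2$ needed for $a,b\in[0,1]$ are captured exactly by the two stated conditions and impose nothing extra, and this is where the non-orthogonality hypothesis is doing real work.
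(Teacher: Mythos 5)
The paper offers no proof of this lemma---it defers entirely to Ref.~\cite{10.1103/physreva.66.062111}---so your self-contained derivation is necessarily a different route from the paper, though it is essentially a reconstruction of the argument in that reference: express the states in an antidistinguishing orthonormal basis, reduce to feasibility of $\delta_1=(1-a)(1-b)$, $\delta_2=a(1-c)$, $\delta_3=bc$ over the unit cube, and settle feasibility via the quadratic in $c$. I have checked the algebra: the discriminant of $(1-\delta_1)c^2+(\delta_1+\delta_2-\delta_3-1)c+\delta_3(1-\delta_2)$ does simplify to $(\delta_1+\delta_2+\delta_3-1)^2-4\delta_1\delta_2\delta_3$; the endpoint values $\delta_1\delta_3(1-\delta_3)$ and $\delta_1\delta_2(1-\delta_2)$ are correct and strictly positive; and your closing step (if $\sum_i\delta_i\ge 1$, the vertex condition together with a nonnegative discriminant forces $\sqrt{\delta_1}\le\delta_1$, hence $\delta_1\ge 1$) is sound. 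Your insistence on the standing hypothesis $\delta_i\in(0,1)$ is also right and genuinely necessary: without it the lemma as literally stated fails (e.g.\ $\{\ket{0},\ket{0},\ket{1}\}$ is antidistinguishable yet has $\delta_1=1$), and it matches the paper's blanket assumption of a pairwise non-orthogonal set.

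The one step you assert rather than prove is the opening reduction. Definition~\ref{def:antidist}(ii) allows the three projectors to have arbitrary rank on a Hilbert space larger than the span, so for the \emph{only if} direction you cannot simply ``take'' the antidistinguishing PVM to be an orthonormal basis of the span; your fine-graining instinct does not directly produce one either, since the rank-one pieces need not lie in the span or assemble into a three-outcome basis of it. The patch is short but should be stated: writing $V_i=\mathrm{range}(\pi_i)$, the condition $\pi_i\ket{\psi_i}=0$ gives $\ket{\psi_i}\in V_j\oplus V_k$, and Cauchy--Schwarz on the shared components yields only the inequalities $\delta_1\le(1-a)(1-b)$, $\delta_2\le a(1-c)$, $\delta_3\le bc$ with $a=\|P_{V_2}\psi_1\|^2$, etc. One then notes that the region cut out by the two stated conditions is downward closed in $(\delta_1,\delta_2,\delta_3)$ (shrinking any $\delta_i$ increases $(1-\sum_j\delta_j)^2$ when $\sum_j\delta_j<1$ and decreases $4\delta_1\delta_2\delta_3$), so the equality case you analyse already implies the conditions for the true overlaps. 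With that addition the proof is complete and correct.
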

\begin{proof}
    See Ref. \cite{10.1103/physreva.66.062111} for proof
\end{proof}

Pairwise overlaps of states in $\mathcal{S}_{\mathrm{PBR}}$ are either $\frac{1}{2}$ or $\frac{1}{4}$. It can be seen through enumeration over all possible subsets $\mathcal{S}\subset\mathcal{S}_\mathrm{PBR}$ of size three, there is no such subset that satisfies Eq. \eqref{eq:condition_pbr}. Hence there are no subsets of three states that are antidistinguishable. By Lemma \ref{lemma:three_states} and Theorem \ref{theorem:main_proj}, we know that therefore there are no subsets of three states that are contextual. Since subsets of two non-orthogonal states cannot be contextual, we have shown that there are no contextual subsets of $\mathcal{S}_{\mathrm{PBR}}$, implying that $\mathcal{S}_{\mathrm{PBR}}$ is maximally contextual.

\newpage

\begin{widetext}

\section{Supplementary Material -- Examples}\label{sec:examples}

We provide two examples of strongly antidistinguishable sets that are critically contextual.

\subsection{ Lison\v ek KS Set}\label{sec:lisonek}

\begin{figure*}
    \centering
    \includegraphics[width=460pt]{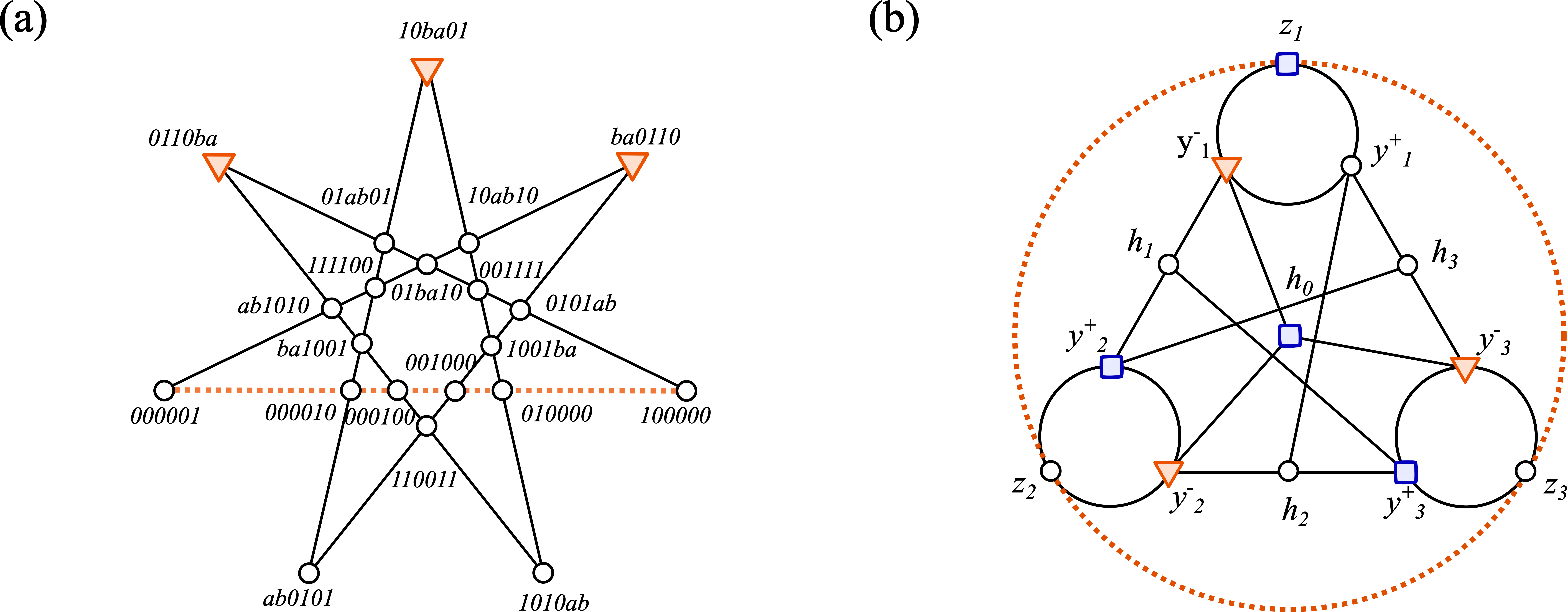}
    \vspace{0.2cm}
    \caption{(a) The Lison\v{e}k state-independent proof of contextuality \cite{PhysRevA.89.042101} with 21 projectors and 7 contexts in dimension $d=6$ is the smallest KS set in terms of contexts. The contexts here are given by straight lines and the labelling $10ab10$ denotes the projector associated with the vector $(1,0,a,b,1,0)$, where $a=e^{2\pi i/3}$ and $b=e^{4\pi i/3}$. The three projectors indicated by orange triangles generate the scenario with the associated antidistinguishing context given by the orange, dotted line. (b) Yu and Oh's set \cite{10.1103/physrevlett.108.030402} is the simplest state independent proof of contextuality with the smallest number of projectors in dimension $d=3$. See section \ref{sec:yu_oh} specifying the projectors. Contexts in this figure are illustrated as circles. The orange projectors generate a contextual instance with the \textbf{SA} measurement given by the orange, dotted line. However, the scenario is noncontextual by our definition of contextuality (Def. 3) as it admits a consistent value assignment highlighted by the blue squares.}
    \label{fig:yu_oh_proof_and_lisonek}
\end{figure*}

We look at the contextuality scenario presented in Figure \ref{fig:yu_oh_proof_and_lisonek}(a) constructed by Lison\ifmmode \check{e}\else \v{e}\fi{}k et al. \cite{PhysRevA.89.042101}, which has 21 projectors and 7 contexts. The selection of any set of projectors that generate the Lison\v{e}k set is guaranteed to have a weakly antidistinguishing measurement by Theorem 1. In fact it is not possible to have greater than 3 projectors generating the set -- as there are 7 contexts and each projector belongs to two -- and no less than three. A generating set of three is represented in orange in Figure \ref{fig:yu_oh_proof_and_lisonek}. There exists an \textbf{SA-PVM} for this set, represented by the orange dotted line. From lemma 2, we know that this set of states is critically contextual.

\subsection{Yu-Oh Set}\label{sec:yu_oh}

We now present an example that reveals some subtlety in the way that we have defined contextuality. 
Consider the set of non-orthogonal states $\mathcal{S}=\{ y_1^- , y_2^+ , y^-_3\}$ and the implied set $I=\{y_1^+ ,y_3^+ , y_2^- , h_0, h_1 , h_2, h_3 , z_1 , z_2, z_3\} \subset \mathcal{I}(S)$ where \cite{10.1103/physrevlett.108.030402},
\begin{alignat*}{4}
    & h_1 = P(-1,1,1) \quad\quad && y_1^- = P(0,1,-1) \quad\quad && y_1^+ = P(0,1,1)  \quad\quad && z_1 = P(1,0,0) \\
    & h_2 = P(1,-1,1) \quad\quad && y_2^- = P(-1,0,1) \quad\quad && y_2^+ = P(1,0,1)  \quad\quad && z_2 = P(0,1,0) \\
    & h_3 = P(1,1,-1) \quad\quad && y_3^- = P(1,-1,0) \quad\quad && y_3^+ = P(1,1,0)  \quad\quad && z_3 = P(0,0,1) \\
    & h_0 = P(1,1,1) \quad\quad && \quad\quad &&  \quad\quad && 
\end{alignat*}
with $P(a,b,c):=\ket{a,b,c}\bra{a,b,c}$ such that $\ket{a,b,c}:= a\ket{0} + b\ket{1} + c\ket{2}$. The set of projectors $X:=I \cup \mathcal{S}$ and the associated contexts illustrated in Figure \ref{fig:yu_oh_proof_and_lisonek}(b), form the Yu and Oh's set \cite{10.1103/physrevlett.108.030402}. While this well-known set does not fit the definition of a state-independent proof of contextuality as presented in this paper, there exists a more general definition of state-independent contextuality \cite{RevModPhys.94.045007}, for which this is an example. It is in fact an example of the more general proof of state-independent contextuality with the smallest number of  contexts possible.

This is an interesting example, as it is an example of when a non-contextual scenario allows a contextual instance. The contextuality scenario in Figure \ref{fig:yu_oh_proof_and_lisonek}(b) does admit a noncontextual value assignment (with projectors coloured blue assigned to 1 and all other projectors assigned zero) and the scenario is therefore noncontextual by our definition. Nevertheless, we have a critically contextual set given by the orange projectors that generate the scenario, with the \textbf{SA-PVM} highlighted by the context in orange in Figure \ref{fig:yu_oh_proof_and_lisonek}(b).

Not only can contextuality be observed in non-deterministic settings (as opposed to the binary assignment of values in Def. 3, but the subtlety in \cite{10.1103/physrevlett.108.030402} highlights that the algebraic conditions of Kochen and Specker do not entirely characterise all types of contextuality. One future direction is to recast our characterisation of contextual sets of states with a more general algebraic framework~\cite{10.48550/arxiv.2408.16764}. 

\end{widetext}

\end{document}